\let\proof\relax
\let\endproof\relax
\tikzset{initial text={}} 
\newtheorem{theorem}{Theorem}
\newtheorem{problem}{Problem}
\newtheorem{proposition}{Proposition}
\newtheorem{lemma}{Lemma}
\newtheorem{definition}{Definition}
\newtheorem{example}{Example}
\DeclareMathOperator*{\realnumbers}{\mathbb{R}}
\newcommand{\predicate}{\gamma}
\newcommand{\Predicate}{\Gamma}
\newcommand{\pu}{\mathbf{u}}
\newcommand{\px}{\mathbf{x}}
\newcommand{\la}{\langle}
\newcommand{\ra}{\rangle}
\newcommand{\tempop}[1]{\mathcal{#1}}
\newcommand{\until}{\,\tempop{U}}
\newcommand{\eventually}{\,\Diamond} 
\newcommand{\globally}{\,\Box} 
\newcommand{\stlnn}{STL$_{nn}$\xspace}
\newcommand{\timedomain}{\mathbb{T}}
\newcommand{\TA}{\mathcal{T} \hspace{-1mm} \mathcal{A}}
\newcommand{\placetextbox}[3]{
  \setbox0=\hbox{#3}
  \AddToShipoutPictureFG*{
    \put(\LenToUnit{#1\paperwidth},\LenToUnit{#2\paperheight}){\vtop{{\null}\makebox[0pt][c]{#3}}}%
  }%
}%
\title{\LARGE \bf Automaton-Guided Control Synthesis for Signal Temporal \\Logic Specifications}
\author{Qi Heng Ho, Roland B. Ilyes, Zachary N. Sunberg, and Morteza Lahijanian
\thanks{Authors are with the department of Aerospace Engineering Sciences at the University of Colorado Boulder, CO, USA
        {\tt\small \{\textit{firstname}.\textit{lastname}\}@colorado.edu}}%
}
\begin{document}
\placetextbox{0.5}{0.95}{To appear in the 61st IEEE Conference on Decision and Control (CDC), December 2022.}
\maketitle

\begin{abstract}
    This paper presents an algorithmic framework for control synthesis for continuous dynamical systems subject to 
    signal temporal logic (STL) specifications. 
    We propose a novel algorithm to obtain a time-partitioned finite automaton from an STL specification, and introduce a multi-layered framework that utilizes this automaton to guide a sampling-based search tree both spatially and temporally. Our approach is able to synthesize a controller for nonlinear dynamics and polynomial predicate functions.  We prove the correctness and probabilistic completeness of our algorithm, and illustrate the efficacy and efficiency of our framework on several case studies.  Our results show an order of magnitude speedup over the state of the art.
\end{abstract}
\section{Introduction}
Control synthesis for complex behaviors is an essential challenge for autonomous systems. Temporal Logics (TL)
admit expressive formalisms to describe complex temporal properties of control systems. Namely, Linear TL (LTL) \cite{Baier2008}, which can express order of events over discrete time, has been well-studied for control synthesis. LTL algorithms are mostly automata-based and perform finite abstraction of continuous-space systems, e.g., \cite{Lahijanian2009, belta2017formal, Bhatia2010, Maly2013, kress2018synthesis}. For continuous-time systems, more appropriate languages are Metric Interval TL (MITL) and Signal TL (STL) 
since they allow the expression of real-time temporal properties \cite{Maler2004}. However, the additional expressivity of these logics comes with additional computational complexity, leading to scalability challenges, especially for systems with complex dynamics. In this paper, we aim to develop an efficient STL control synthesis framework for systems with general (possibly nonlinear) dynamics.


The major difference between STL and MITL lies in that STL is interpreted over real-valued signals whereas MITL is interpreted over boolean signals. Work \cite{Alur1996} shows that MITL formulas can be translated to timed automata with multiple clocks, which results in additional complexity for synthesis. For STL, on the other hand, 
there does not exist any algorithm to translate its formulas to automata.  
The difference between the two logics practically dissolves when the signals are well-behaving \cite{Maler2004}, which is generally the case for control synthesis.  That is, there exists a direct mapping between real-valued and boolean signals. However, this mapping is difficult to compute algorithmically if the predicates of the signals are defined over nonlinear functions.

In the controls community, optimization-based methods have been a popular approach to STL control synthesis. Work \cite{raman2014} proposed a Mixed Integer Linear Program (MILP) encoding for model predictive control constraints. MILP is known to be NP-hard, which may be computationally inefficient, and with no guarantees on constraint satisfaction. Another work uses control barrier functions on a fragment of STL for control over short time horizons \cite{Lindemann2019}. These works, however, are constrained to linear predicates \cite{raman2014} or control-affine systems \cite{Lindemann2019}.

Recently, sampling-based methods have been employed to solve real-time TL tasks.  Work \cite{stl-rrtstar} uses an RRT* approach with time sampling in order to grow a tree that maximally satisfies an STL specification. They assume linear predicates, and the availability of a steer function that precisely steers the system from a state to another state given an exact time, which is hard to design for general dynamical systems. Work \cite{Barbosa2019gae} proposes a cost function to guide exploration to satisfy a restricted safety fragment of STL.
Work \cite{Barbosa2019} introduces a sampling-based MITL synthesis.  It first translates an MITL formula to a timed automaton, which is then converted to a time-abstract zone automaton. Then, it uses a geometric RRT* approach to calculate untimed waypoints and set up a linear program to compute time stamps in order to generate a timed path. Thereafter, a control barrier function is used to ensure dynamics feasibility. 
That work however is limited to propositional regions in the form of convex polytopes and control affine systems.

In this paper, we introduce an efficient sampling-based STL automaton-guided control synthesis framework that is able to handle both nonlinear dynamics with input constraints and nonlinear predicates. Specifically, we focus on a fragment of STL with \textit{non-nested temporal operators} (\stlnn), which maintains all the key operators of STL. This fragment subsumes the fragment considered in \cite{Lindemann2019} by including negation and boolean combination of temporal operators, and is expressive for a wide range of timed temporal tasks. We present a novel algorithm to efficiently translate \stlnn formulas into finite automata that can be used to guide a sampling-based search, both spatially and temporally. We then present a construction of state space abstraction based on nonlinear predicates, which we use in conjunction with the automaton to produce high-level discrete guides in synergy with low-level continuous planning. We prove that this framework is correct-by-construction and probabilitically complete. Our case studies illustrate that the framework is not only able to handle general nonlinear dynamics and polynomial predicate functions, but also that it is at least an order of magnitude faster than the state of the art.


In summary, the contributions of the paper are four-fold: (1) a novel method for efficient conversion of \stlnn formulas to finite automata that precisely accept the language of the formulas based on techniques from untimed TL, (2) a use of syntactic separation to partition time in order to improve the granularity of the abstraction in the time dimension for efficient synthesis, (3) a multi-layered synthesis framework that utilizes the constructed automaton to guide a sampling-based search both spatially and temporally, and (4) a series of illustrative case studies and benchmarks.



\section{Problem Formulation}
\label{sec:problem}


Consider a general dynamical system described by
\begin{equation}
    \label{eq:system}
    \dot x = f(x,u)
\end{equation}
where $x \in X \subset \realnumbers^n$ is the state, $u \in U \subset \realnumbers^c$ is the control, and $f: X \times U \to \realnumbers^n$ is the vector field. We assume $f$ is a continuous (possibly nonlinear) function.  
Let $\mathbb{T} = \realnumbers_{\geq 0}$ denote the continuous time domain.
Then, given an initial state $x_0 \in X$ and a controller $\pu: \mathbb{T} \to U$, the solution of System  \eqref{eq:system} is a state trajectory $\px: \mathbb{T} \to X$.

We are interested in the  properties of System \eqref{eq:system} with respect to a set of predicates defined over state space $X$.
Let $H = \{h_1, h_2, \ldots, h_l\}$ be a given set of functions, where $h_i : \realnumbers^n \rightarrow \realnumbers$ is a polynomial function for every $1 \leq i \leq l$.  
Then, the set of predicate $\Predicate = \{\predicate_1, \cdots, \predicate_l\}$ is defined on $H$ such that, $\forall i \in \{1,\ldots,l\}$, the Boolean value of $\predicate_i: X \to \{\top,\bot\}$ is determined by the sign of function $h_i$ as:
\begin{equation}
    \label{eq: predicate def}
    \predicate_i(x) := 
    \begin{cases}
        \top  & \text{ if } h_i(x) \geq 0\\
        \bot & \text{ if } h_i(x) < 0.
    \end{cases}
\end{equation}
To express the timed temporal properties over $\Predicate$, we use STL. Specifically, we focus on the \textit{STL with non-nested temporal operators} (\stlnn) fragment, which 
is more expressive than the one in \cite{Lindemann2019} and includes all the key temporal operators of STL except for their nesting.

\begin{definition}[STL$_\text{nn}$ Syntax]
    \label{def:STLsyntax}
    The \stlnn syntax is recursively defined by:
    \begin{align*}
        \varphi &:= \phi \mid \neg \varphi \mid \varphi \wedge \varphi \mid \phi \until_I \phi\\
        \phi &:= \predicate \mid \neg \phi \mid \phi \wedge \phi
    \end{align*}
    where $\predicate \in \Predicate$ is a predicate, and $I = \la a,b \ra$ is a time interval with $\la \in \big\{ [, ( \big\}$, $\ra \in \big\{ ], ) \big\}$, $a,b \in \mathbb{T}$, and $a \leq b < \infty$.
\end{definition}

\noindent
From this syntax, one can derive other standard temporal operators, e.g., the time-constrained \textit{eventually}
($\eventually_I$) and \textit{globally} ($\globally_I$) operators are given by
    \begin{align}
        \eventually_I \phi \equiv \top \until_I \phi \quad \text{and } \quad \globally_I \phi \equiv \neg \eventually_I \neg \phi.
    \end{align}

The interpretations of STL formulas are over trajectories of System~\eqref{eq:system} as defined below.
\begin{definition}[\stlnn Semantics]
    \label{def:STLsemantics}
    The semantics of \stlnn formula $\varphi$ is defined over trajectory $\px$ recursively as:
    \begin{itemize}
        \item $(\px,t) \models \top \iff \top$;
        \item $(\px,t) \models \predicate \iff h(\px(t)) > 0$;
        \item $(\px, t) \models \neg \varphi \iff (\px, t)  \not\models \varphi$;
        \item $(\px,t) \models \varphi \wedge \varphi' \iff (\px,t) \models \varphi \wedge (\px,t) \models \varphi'$;
        \item $(\px,t) \models \varphi \until_{\la a,b \ra} \varphi' \iff \exists t' \in \la t + a, t + b \ra$ such that $(\px,t') \models \varphi' \; \wedge \; \forall t'' \in [t, t'], \; (\px,t'') \models \varphi$,
\end{itemize}
    where $(\px,t)$ 
    denotes trajectory $\px$ at time $t$, and $\models$ is the satisfaction relation. Trajectory $\px$ satisfies formula $\varphi$, denoted by $\px \models \varphi$, if $(\px,0) \models \varphi$.
\end{definition}
\noindent

Given a property of interest expressed in \stlnn formula $\varphi$, our goal is to synthesize a controller $\pu$ such that the resulting trajectory $\px$ satisfies $\varphi$.  

\begin{problem}
    \label{prob:controllersynthesis}
    Given System \eqref{eq:system}, a set of polynomial functions $H=\{h_1,\ldots, h_l\}$ and their corresponding set of predicates $\Predicate = \{\predicate_1, \ldots, \predicate_l \}$ per \eqref{eq: predicate def},
    and an \stlnn formula $\varphi$ defined over $\Predicate$, synthesize a controller $\pu: \mathbb{T} \to U$ such that the resulting system trajectory $\px \models \varphi$.
\end{problem}

Note that Problem~\ref{prob:controllersynthesis} poses two major challenges: (i) System~\eqref{eq:system} has general (possibly nonlinear) dynamics, and (ii) the predicates in $\Predicate$ are defined over polynomial functions. Hence, optimization-based as well as linearization-based approaches are likely to suffer due to non-convexity and approximation errors.
Instead, we approach the problem by employing sampling-based techniques to deal with (i) and semi-algebraic set analysis to deal with (ii). To this end, we first develop a method to construct a hybrid automaton from $\varphi$ and System~\eqref{eq:system} to capture the set of all trajectories that satisfy $\varphi$. This reduces the problem to reachability analysis of the hybrid automaton. Then, we perform this analysis efficiently using a guided sampling-based search.

\section{Preliminaries}
\label{sec:preliminaries}
In this section, we formally define concepts that are used in our approach.
We use \textit{LTL with finite traces} (LTLf) in the construction of the automaton from \stlnn formulas.
\begin{definition}[LTLf Syntax]
    Let $P$ be a set of atomic propositions. Then, LTLf syntax is recursively defined by:
    \begin{align*}
        \psi := p \mid \neg \psi \mid \psi \wedge \psi \mid \tempop{X} \varphi \mid \psi \until \psi
    \end{align*}
    where $p \in P$, and $\tempop{X}$ and $\until$ are the \emph{next} and \emph{until} operators.
\end{definition}

The semantics of LTLf are defined over finite words.  See \cite{ltlf} for details.
\noindent
We utilize non-deterministic finite automata (NFA) to guide sampling-based search.

\begin{definition}[NFA]
    \label{def:NFA}
    A non-deterministic finite automaton (NFA) is a tuple $\mathcal{A} = (Q, Q_0, \Sigma,  \delta, F)$, where
    \begin{itemize}
        \item $Q$ is a finite set of states;
        \item $Q_0 \subseteq Q$ is the set of initial states;
        \item $\Sigma$ is a set of input symbols;
        \item $\delta : Q \times \Sigma \rightarrow 2^Q$ is the transition function;
        \item $F \subseteq Q$ is the set of accepting states.
    \end{itemize}
\end{definition}

\noindent
Given a finite word $\pi = \sigma_1 \sigma_2 \ldots \sigma_n$ where $\sigma \in \Sigma$, a run $q = q_0q_1\ldots q_n$, where $q_0 \in Q_0$ and $q_{i+1} \in \delta(q_i, \sigma_{i+1})$ for all $0\leq i<n$, on NFA $\mathcal{A}$ is induced.  We say $\pi$ is accepted by $\mathcal{A}$ if there exists an induced run $q$ such that $q_n \in F$.  The \textit{language} of $\mathcal{A}$ denoted by $\mathcal{L}(\mathcal{A})$ is the set of words that are accepted by $\mathcal{A}$.

A \textit{deterministic finite automaton} (DFA) is an NFA in which $|Q_0|=1$ and, $\forall q \in Q$ and $\forall \sigma \in \Sigma$, $|\delta(q,\sigma)| \leq 1$. 
Every LTLf formula $\psi$ defined over $P$ can be algorithmically translated to a DFA $\mathcal{A}_\psi$ with symbols $\Sigma = 2^P$ such that $\pi \models \psi \iff \pi \in \mathcal{L}(\mathcal{A}_\psi)$.

From \stlnn formulas, we construct an NFA that is constrained with time.
\begin{definition}[Timed NFA]
A timed NFA is a tuple $\TA = (Q, \Sigma_t, Q_0, \text{Inv}, \delta, F)$ where $Q, Q_0, \delta$ and $F$ are as in Def~\ref{def:NFA}, 
    \begin{itemize}
        \item $\Sigma_t = \Sigma \times \mathbb{T}$ is the set of input symbols,
        where each symbol in $\Sigma_t$ is a pair of $\sigma \in \Sigma$ and time $t \in \timedomain$, and
        \item $\text{Inv} : Q \rightarrow \mathcal{I}$ is an invariant function over a time interval, where $\mathcal{I}$ is the set of all intervals over $\mathbb{T}$.
    \end{itemize}
\end{definition}
\noindent
In $\TA$ a transition $q \xrightarrow{(\sigma, t)} q'$ exists iff $q' \in \delta(q, \sigma)$ and $t \in \text{Inv}(q')$.





\begin{definition}[Timed path]
    Let $L: X \rightarrow \Sigma$ be a labeling function that maps a state $x  \in X$ to an input symbol $\sigma \in \Sigma$.
    A finite trajectory $\px$ in a state space $X$ defines a timed path $(\mathbf{p}, \mathbf{t}) = (x_0, t_0)(x_1, t_1)\cdots(x_l,t_l)$ such that for all $i \in \{0, \cdots, l\}$:
    \begin{itemize}
        \item $\mathbf{p} = x_0x_1\cdots x_l$, with $x_i = \px(t_i)$,
        \item $t_0t_1\cdots t_l$ is a sequence of time stamps, with $t_i \in \timedomain$ and $t_{i+1} \geq t_{i}$,
        \item $L(x_i)$ remains constant during the time interval $(t_i, t_{i+1})$, i.e., $L(\px(t)) = L(x_i)$ $\forall t \in (t_i,t_{i+1})$.
    \end{itemize}
\end{definition}

\begin{definition}[Timed word]
    A finite trajectory $\px$ with a timed path $(\mathbf{p}, \mathbf{t})$ generates a timed word $w(\px) = w((\mathbf{p}, \mathbf{t})) = (L(x_0), t_0)(L(x_1), t_1)\cdots(L(x_l),t_l)$.
\end{definition}

\noindent
The notions of acceptance and language of $\TA$ over timed words are extended from NFAs in a straightforward manner.  In addition, we say that a timed path is accepted by $\TA$ if its timed word is accepted by $\TA$.

\section{Temporal-aware Automata Guided Search}
\label{sec:method}

Here, we present our control synthesis framework that solves Problem~\ref{prob:controllersynthesis}. Our approach can be formalized by decomposing Problem~\ref{prob:controllersynthesis} into two steps: timed automaton construction and sampling-based search.
We first observe that \stlnn formulas reason over a single (global) time. Taking advantage of this, we propose a novel algorithm to efficiently construct a timed NFA $\TA$ such that every trajectory that satisfies the \stlnn formula has a timed word that is accepted by $\TA$.
Next, given $\TA$, our goal is to synthesize a controller $\pu$ that drives System~\eqref{eq:system} through a trajectory that satisfies $\varphi$. Note this is a reachability problem on the hybrid system that is induced from $\TA$ and System~\eqref{eq:system}. To efficiently solve this problem, we employ a sampling-based approach with high-level guides using an automaton. An overview of the algorithm is shown in Alg.~\ref{alg:framework}.


\begin{algorithm}
    \caption{Automaton-Guided Synthesis for STL}
    \label{alg:framework}
    \SetKwInOut{Input}{Input}\SetKwInOut{Output}{Output}
    \Input{Formula $\varphi$, $X$, System~\eqref{eq:system}, $x_0$, $t_{max}$, $t_{e}$.}
    \Output{Trajectory and controllers if successful, otherwise NULL.}
    $\mathcal{M} \leftarrow$ ComputeAbstraction$(X, \varphi.H)$ \\
    $\TA_\varphi \leftarrow$ ConstructAutomaton$(\varphi)$\\
    $\mathcal{P} \leftarrow \mathcal{M} \otimes \TA_\varphi$\\
    $\textsc{tr} \leftarrow$ InitializeTree($x_0$)\\
    \While{Time Elapsed $< t_{max}$}
    {
        $\mathbf{L} \leftarrow \{(d_i, q_i)\}_{i \geq 0} \leftarrow$ ComputeLead($\mathcal{P}, t_{max}$)\\
        $C \leftarrow$ ComputeAvailableProductStates($\textsc{tr}, \mathbf{L}$)\\
        $(\mathbf{v}, \mathbf{u}, \mathbf{t}) \leftarrow$ Explore($\mathbf{L}, \textsc{tr}, \mathcal{P}, X, t_{e}$)\\
        \If{$(\mathbf{v}, \mathbf{u}, \mathbf{t}) \neq \emptyset$}{
            \Return $(\mathbf{v}, \mathbf{u}, \mathbf{t})$
        }
    }
    \Return No Solution
\end{algorithm}



\subsection{State Predicate Abstraction}
\label{sec:abstraction}
To construct $\TA$, we first establish a relationship between predicates and state labels. To this end, we abstract the system into a discrete model based on the predicates. Note that predicates in $\Predicate$ define a semi-algebraic set on $\realnumbers^n$ since every $h \in H$ is a polynomial function. Hence, bounded state space $X \subset \realnumbers^n$ can be partitioned into a set of disjoint regions $D$ (i.e., the pairwise intersection of every two $d, d' \in D$ is empty, and the union of all $d \in D$ is the state space $X$), such that the points in each region are predicate-truth invariant, i.e., have the same truth-value assignments of predicates in $\Predicate$.  Formally, for a partition region $d \subseteq X$ and every two points $x,x' \in d$, $\predicate(x) = \predicate(x')$ for all $\predicate \in \Predicate$. Through predicate abstraction, each state $x \in X$ is associated with a subset of $\Predicate$, a symbol $\sigma \in \Sigma = 2^{\Predicate}$. 
We define $L : D \rightarrow 2^{\Predicate}$ to be a labeling function that assigns to $d \in D$ the predicates that are true for all $x \in d$.

Predicate abstraction can be achieved in multiple ways. For polynomial predicate functions, methods such as cylindrical algebraic decomposition \cite{Lindemann2007, schwartz1983} can be used to find decompositions for predicate invariance. Computing this decomposition has a worst-case complexity of doubly exponential in the dimension of the space. For linear predicate functions, there exist efficient algorithms, e.g., geometry-based conforming Delaunay triangulation, to construct partitions that respect the predicate boundaries.

After partitioning the space into regions with predicate-truth invariance, we use geometric adjacency between the regions to construct an abstraction graph. This abstraction graph is a tuple $\mathcal{M} = (D, E, \Predicate, L)$, where the set of discrete states is $D$, and an edge $(d, d') \in E$ exists if there is adjacency between $d \in D$ and $d' \in D$.
Note that $\mathcal{M}$ captures all possible transitions between adjacent regions even though they may not be realizable by System~\eqref{eq:system}. Such transitions are implicitly ignored in the second planning step of our framework, described in Sec.~\ref{sec:planning}. This is an advantage of our approach since it drastically reduces the computational burden of constructing a highly accurate abstraction.
\subsection{Automaton Construction}

Here, we introduce a novel method to construct a timed NFA from \stlnn formulas. This construction takes advantage of the non-nested aspect of this fragment for syntactic separation, and uses techniques from automata theory to construct an automaton that is partitioned in the time dimension. 

\begin{algorithm}
    \caption{Construct Automaton}
    \label{alg:automatonconstruction}
    \SetKwInOut{Input}{Input}\SetKwInOut{Output}{Output}
    \Input{\stlnn formula $\varphi$, time partition set $\mathbf{T}$.}
    \Output{Timed NFA $\TA_{\varphi}$.}
    $\{\Phi_i\}_{\geq 0} \leftarrow$ TimePartition($\varphi, \mathbf{T}$), $\TA \leftarrow \emptyset$\\
    \For{$\Phi \in \{\Phi_i\}_{i \geq 0}$}
    {
        \For{$\psi_{(t_{j-1}, t_j]} \in \Phi$}{
            $\psi^j \leftarrow$ IgnoreTimeWindows($\psi_{(t_{j-1}, t_j]}$)\\
            $\mathcal{A}_j \leftarrow $ LTLf2DFA($\psi^j$)\\
            $\TA_j \leftarrow$ ConvertToTimedDFA($\mathcal{A}_{j}, \psi_{(t_{j-1}, t_j]}$)\\
            $\TA_{\varphi} \leftarrow$ ConnectDFA($\TA_{\varphi}, \TA_{j}$)
        }
    }
    \Return $\TA_\varphi$
\end{algorithm}

The algorithm is shown in Alg.~\ref{alg:automatonconstruction}. It takes an \stlnn formula $\varphi$ and time partition set $\mathbf{T}$, and returns a timed NFA.  We explain the steps of the algorithm below and use the following running example to clarify the key concepts.

\begin{example}[Reach-avoid]
    \label{ex:reachavoidstl}
    Consider a system with $X \subset \realnumbers^2$ and specification ``reach a goal region defined by $3 < x_1 \leq 4$ and $2 < x_2 \leq 3$ in the first 18 seconds while avoiding unsafe region $3 < x_1 \leq 4$ and $2 < x_2 \leq 3$ in the first 6 seconds":
    \begin{align*}
        \varphi_\text{RA} =& \eventually_{[0,18]}(3 < x_1 \leq 4 \wedge 2 < x_2 \leq 3) \;\; \wedge \\ &  \globally_{[0,6]} \neg (1 < x_1 \leq 2 \wedge 2 < x_2 \leq 3) \\
        =& \eventually_{[0,18]}(\predicate_1(\px)) \wedge \globally_{[0,6]} \neg (\predicate_2(\px)).
    \end{align*}
\end{example}

\subsubsection{Time partitioning}


Given $\varphi$, our goal is to obtain an equivalent formula in which subformulas depend only on a disjoint segment of time. We can use the following equivalence rule to perform a syntactic separation.

\begin{proposition}[STL Separation \cite{boundedmodelchecking}]
    \label{prop:stlseparation}
    A \textit{time separation} of $\varphi \until_{[a,b]}\varphi'$ at time $\tau \in [a, b]$ is given by:
    \begin{align*}
    \varphi \until_{[a,b]}\varphi' \equiv &\; \varphi \until_{[a,\tau)} \varphi'  \nonumber \\
    &\vee \Big( \globally_{[a,\tau)} \varphi \wedge \big( \eventually_{[\tau,\tau]}(\varphi \wedge \varphi') \vee \varphi \until_{(\tau, b]}\varphi' \big) \Big).
\end{align*}

\end{proposition}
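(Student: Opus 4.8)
The plan is to prove the equivalence \emph{semantically}, i.e., to show that for every trajectory $\px$ and every time $t$, $(\px,t)$ satisfies the left-hand side if and only if it satisfies the right-hand side, working directly from the semantics of Def.~\ref{def:STLsemantics}. Since $\eventually$ and $\globally$ are derived operators, the first step is to unfold them into explicit quantified statements: $(\px,t)\models\globally_{[a,\tau)}\varphi$ means $(\px,t'')\models\varphi$ for all $t''\in[t+a,t+\tau)$, and $(\px,t)\models\eventually_{[\tau,\tau]}(\varphi\wedge\varphi')$ means $(\px,t+\tau)\models\varphi$ and $(\px,t+\tau)\models\varphi'$. After this unfolding, both sides reduce to first-order assertions about a witness time for $\varphi'$ together with an interval on which $\varphi$ must hold, so the equivalence becomes a statement about real intervals that can be settled by case analysis.

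For the forward implication I would take a witness $t'\in[t+a,t+b]$ with $(\px,t')\models\varphi'$ and $(\px,t'')\models\varphi$ for all $t''\in[t,t']$, and split on the position of $t'$ relative to the separation point $t+\tau$: if $t'\in[t+a,t+\tau)$, the witness already certifies the first disjunct $\varphi\until_{[a,\tau)}\varphi'$; if $t'=t+\tau$, then $\varphi\wedge\varphi'$ holds at $t+\tau$, yielding the $\eventually_{[\tau,\tau]}(\varphi\wedge\varphi')$ disjunct, while $\varphi$ on $[t,t']\supseteq[t+a,t+\tau)$ yields $\globally_{[a,\tau)}\varphi$; and if $t'\in(t+\tau,t+b]$, the same witness certifies $\varphi\until_{(\tau,b]}\varphi'$ together with $\globally_{[a,\tau)}\varphi$. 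The backward implication inverts this: from whichever disjunct holds I would extract a witness $t'\in[t+a,t+b]$ and the interval on which $\varphi$ is guaranteed, then reassemble these pieces into the single requirement that $\varphi$ hold on \emph{all} of $[t,t']$, as demanded by $\varphi\until_{[a,b]}\varphi'$.

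The step I expect to be the main obstacle is reconciling this global history requirement of $\until$ with the piecewise description on the right, which only speaks about $[a,\tau)$, the isolated point $\tau$, and $(\tau,b]$. Two places need care. First, splitting $[a,b]$ into the half-open pieces $[a,\tau)$ and $(\tau,b]$ omits the single instant $\tau$; the dedicated $\eventually_{[\tau,\tau]}(\varphi\wedge\varphi')$ disjunct is exactly what patches this seam, and in the backward direction I must verify that no gap in $\varphi$ opens at $t+\tau$ — which holds because the until semantics of Def.~\ref{def:STLsemantics} forces $\varphi$ on the full prefix $[t,t']$, so the $(\tau,b]$-until already carries $\varphi$ continuously through $t+\tau$. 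Second, I would scrutinize the leading segment $[t,t+a)$: the left-hand side forces $\varphi$ there, so each right-hand disjunct must too, and I would verify carefully whether $\globally_{[a,\tau)}\varphi$ combined with the $\eventually_{[\tau,\tau]}$ disjunct recovers $\varphi$ on $[t,t+a)$ or whether this segment requires separate attention. Tracking the half-open endpoints at $\tau$ consistently through each sub-case is bookkeeping-heavy, but it becomes routine once the three-way split on $t'$ is fixed.
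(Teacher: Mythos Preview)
The paper does not prove this proposition at all; it is quoted from \cite{boundedmodelchecking} and used without argument, so there is no in-paper proof to compare against. Your semantic unfolding together with the three-way case split on the position of the witness $t'$ relative to $t+\tau$ is the standard route, and the forward direction goes through exactly as you outline.

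The worry you raise about the segment $[t,t+a)$ is not mere bookkeeping---it is a genuine obstruction under the paper's own semantics. By Def.~\ref{def:STLsemantics}, $\varphi\until_{[a,b]}\varphi'$ demands $\varphi$ on all of $[t,t']$, in particular on $[t,t+a)$. On the right-hand side, the disjuncts $\varphi\until_{[a,\tau)}\varphi'$ and $\varphi\until_{(\tau,b]}\varphi'$ each carry this prefix requirement through their own until-semantics, so those branches are fine. But the middle branch $\globally_{[a,\tau)}\varphi\wedge\eventually_{[\tau,\tau]}(\varphi\wedge\varphi')$ guarantees $\varphi$ only on $[t+a,t+\tau]$ and says nothing about $[t,t+a)$; hence for $a>0$ the backward implication fails there as written. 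The clean repair is to replace $\globally_{[a,\tau)}\varphi$ by $\globally_{[0,\tau)}\varphi$, after which your argument closes verbatim. In the paper's actual uses the outermost intervals have left endpoint $0$, so the defect is silent in context, but you should record it explicitly rather than expect the verification to succeed for the identity as stated.
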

\noindent
Note that Proposition~\ref{prop:stlseparation} focuses on $\varphi \until_{[a,b]}\varphi'$, where $I = [a,b]$ is a closed time interval. Other cases of $I = \la a, b \ra$ are not shown but can be derived in a similar manner. 

We use this equivalence rule to syntactically separate the \stlnn formula $\varphi$ into $k$ time partitions, at time points in $\mathbf{T} = \{0,\tau_1, \cdots, \tau_{k-1}\}$ to obtain $k$ subformulas $\varphi_1, \cdots, \varphi_k$. We then express the partitioned formula in Disjunctive Normal Form (DNF) and obtain $\varphi = \Phi_1 \vee \Phi_2 \vee \cdots \vee \Phi_n$, where $\Phi_i, i \in \{1, \cdots, \Phi_n\}$ are clauses of DNF $\varphi$.

The role of syntactic separation is to reformulate $\varphi$ into time discretized subformulas. By syntactically separating a formula into multiple sub-formulas connected by time, we are able to capture an increased time granularity in the timed NFA, as discussed in Sec.~\ref{sec:experiments}.

We define the minimal time partition as the minimum number of partitions necessary to ensure that the time intervals are equivalent for all temporal operators in each subformula of $\Phi_i$. Note that this can always be achieved because of the non-nested property of \stlnn.

\begin{definition}[Minimal Time Partition]
    Let $T$ be the set of boundary points of the time intervals of the temporal operators in $\varphi$. The minimal time partition is $\mathbf{T} = T \setminus \max(T)$.
\end{definition}

\begin{example}
\label{ex:subformulas}
Consider $\varphi_\text{RA}$ in Example~\ref{ex:reachavoidstl}. Using syntactic separation, we can separate each subformula at time $\mathbf{T} = \{0,6\}$ to form the minimal time partition, and with straightforward manipulation, we obtain the following formula in DNF form:
\vspace{-1mm}
    $\varphi_\text{RA} = \Phi_1 \vee \Phi_2 \vee \Phi_3,$
    where  
    \vspace{-0.5mm}
    \begin{align*}
        \Phi_1 =    & \eventually_{[0,6]}(\predicate_1(\px)) \wedge \globally_{[0,6]}\neg (\predicate_2(\px)), \\
        \Phi_2 =    &  \globally_{[0,6)} \neg (\predicate_2(\px)) \wedge \globally_{[6,6]}\neg (\predicate_2(\px)) \wedge \eventually_{[6,6]}(\predicate_1(\px)), 
        \\
        \Phi_3 =    & \globally_{[0,6]} \neg (\predicate_2(\px)) \wedge \eventually_{(6,18]}(\predicate_1(\px)).
    \end{align*}
\end{example}

\subsubsection{Parse Tree}

From a time partitioned formula $\varphi$, we can construct a parse tree $\mathcal{T}$ with $n$ branches, each corresponding to a $\Phi_i$, $i\in \{1,\ldots,n\}$. Without loss of generality, let subformula $\Phi_i$ 
be in the following form:
\vspace{-.5mm}
\begin{align}
    \label{eq:hybridautomata}
    \Phi_i = \psi_{\la 0,t_1 \ra}\wedge\psi_{\la t_1,t_2 \ra}\wedge\cdots\wedge\psi_{\la t_{m-1},t_{m} \ra}.
\end{align}

\vspace{-0.5mm}
\noindent
$\Phi_i$ can always be written in this form by aggregating subformulas with the same time interval. Then, in the branch of $\mathcal{T}$ corresponding to $\Phi_i$, there are 
$m$ nodes, each representing a time-partitioned subformula $\psi_{\la t_{j-1},t_j \ra}$ for $j \in \{1,\ldots,m\}$.
\begin{example}[Parse Tree]
    \label{ex:parsetree}
    Consider the partitioned formula in Example~\ref{ex:subformulas}. 
    Each $\Phi_i$ can be represented as a branch
    $\mathcal{B}_i = (\text{V}_i,E_i)$ consisting of a set of nodes $\text{V}_i$ and edges $E_i$:
    \vspace{-1mm}
    \begin{align*}
        \mathcal{B}_1 = & (\{\text{v}_{\psi^{1}_{[0,6]}}\}, \emptyset) \text{ from } \Phi_1,\\
        \mathcal{B}_2 = & (\{\{\text{v}_{\psi^{2}_{[0,6)}}, \{\text{v}_{\psi^2_{[6,6]}}\},\{(\{\text{v}_{\psi^{2}_{[0,6)}}, \{\text{v}_{\psi^2_{[6,6]}})\} \text{ from } \Phi_2,\\
        \mathcal{B}_3 = & (\{\{\text{v}_{\psi^{3}_{[0,6]}}, \{\text{v}_{\psi^3_{(6,18]}}\}, \{(\{\text{v}_{\psi^{3}_{[0,6]}}, \{\text{v}_{\psi^3_{(6,18]}})\}) \text{ from } \Phi_3.
    \end{align*}
    
    \vspace{-1mm}
    \noindent
    Then, the parse tree is $\mathcal{T} = (\text{V}_1 \cup \text{V}_2 \cup \text{V}_3, E_1 \cup E_2 \cup E_3)$ as shown in Fig~\ref{fig:parsetree}.
\end{example}

\begin{figure}[t]
    \centering
    \resizebox{0.4\linewidth}{!}{
    \begin{tikzpicture}[->,>=stealth]
    \node[circle, fill,scale=0.3] (s0) at(0,0) {};
      \node[state, above right= of s0] (q1) {$\psi^{1}_{[0,6]}$};
      \node[state, right = 0.5cm of s0] (q3) {$\psi^2_{[0,6)}$};
      \node[state, right = 0.5cm of q3] (q4) {$\psi^2_{[6,6]}$};
      \node[state, below right = of s0] (q5) {$\psi^3_{[0,6]}$};
      \node[state, right = 1cm of q5] (q6) {$\psi^3_{(6,18]}$};
        \draw (q3) edge[above] node{} (q4)
        (s0) edge[above] node{} (q1)
        (s0) edge[above] node{} (q3)
        (s0) edge[above] node{} (q5)
        (q5) edge[above] node{} (q6);
    \end{tikzpicture}
    }
    \caption{\small Parse Tree in Example \ref{ex:parsetree}.
    }
    \label{fig:parsetree}
    \vspace{-5mm}
\end{figure}
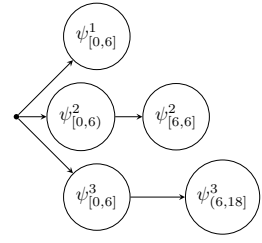

\subsubsection{Timed NFA Construction}
Now, consider a parse tree node corresponding to formula $\psi_{\la t_{j-1},t_{j} \ra}$. By ignoring the time interval 
$\la t_{j-1},t_j \ra$, we obtain the untimed version of this formula, $\psi^{j}$. By using the labeling scheme described above to define atomic propositions on predicates, $\psi^{j}$ can be treated as an LTLf formula since (i) the evaluation of $\psi_{\la t_{j-1}, t_j \ra}$ is over the same time interval for all temporal operators in $\psi^{j}$ due to syntactic separation, and (ii) each time interval is finite. Therefore, if a timed path $(\mathbf{p}, \mathbf{t}) \models \psi_{\la t_{j-1},t_j \ra}$, then its untimed word $w(\mathbf{p}) \models \psi^{j}$.

Recall that an LTLf formula can be translated into a DFA with an equivalent language. Thus, we can translate $\psi^{j}$ to a DFA $\mathcal{A}_{j}$ (for every node of the parse tree). To respect the time interval of the original formula $\psi_{\la t_{j-1},t_j \ra}$, we add the time interval $\la t_{j-1},t_j \ra$ as an invariant to every state in $\mathcal{A}_j$, obtaining timed DFA $\TA_j$ with the set of symbols $2^\Predicate \times \mathbb{T}$.  


To complete the construction of the automaton $\TA_\varphi$ for the original $\varphi$, we connect the states of the timed DFAs successively according to the edges in $\mathcal{T}$.  Let $\mathcal{A}_j$ and $\mathcal{A}_{j+1}$ be two DFAs corresponding to a parent and its child node $\mathcal{T}$, respectively.  Further, let $q^F_j$ be an accepting state of $\mathcal{A}_j$ and $q^0_{j+1}$ be the initial state of $\mathcal{A}_{j+1}$. We add a transition from $q^F_j$ to every successor of $q^0_{j+1}$ with the corresponding symbol in $2^\Gamma$. Finally, we define the accepting states of $\TA_\varphi$ using the accepting states of the DFAs corresponding to the leaf nodes of the parse tree $\mathcal{T}$ in the following way. If the accepting state of this DFA is an absorbing state, then the same corresponding state is accepting. If this state is not absorbing, we add a transition to a new absorbing accepting state $q^F_\varphi$ with the time interval ${\la t_f, \infty)}$ 
as an invariant, where $t_f$ is the right time bound of the previous accepting state. The resulting structure is $\TA_\varphi$, which is a single timed NFA due to the added transitions between the DFAs.

\vspace{-1mm}
\begin{example}
 \label{ex:hybridautomaton}
Consider subformula 
    $\Phi_3 = \globally_{[0,6]} \neg (\predicate_2(\px)) \wedge \eventually_{(6,18]}\predicate_1(\px))$ from Example~\ref{ex:subformulas}. Using \eqref{eq:hybridautomata}, we obtain
    \begin{align*}
        \psi^3_{[0, 6]} = \globally_{[0,6]} \neg (\predicate_2(\px)), \quad
        \psi^3_{(6,18]} = \eventually_{(6,18]}(\predicate_1(\px)).
    \end{align*}
    We translate each $\psi$ to an LTLf formula by evaluating the untimed version of $\psi$: \quad
    \begin{align*}
        \psi^{0,6} = \globally \neg (\predicate_2(\px)), \quad 
        \psi^{6,18} = \eventually (\predicate_1(\px)),
    \end{align*}
    and translate each LTLf formula into a DFA. The DFA for each LTLf formula is shown in Fig.~\ref{fig:dfa_0} and Fig. \ref{fig:dfa_6}.
    The final timed automaton $\TA_{\Phi_3}$ is show in Fig. \ref{fig:final_TA}.
\end{example}

\begin{figure}[t]
  \begin{center}
    \begin{subfigure}{0.49\linewidth}
        \centering
        \scalebox{0.65}{
        \begin{tikzpicture}[->,>=stealth]
      \node[state, initial, accepting] (q1) {$q_1$};
        \draw (q1) edge[loop above] node{$\neg\predicate_2$} (q1);
    \end{tikzpicture}
    }
    \caption{\small DFA for $\psi^{0,6}$ 
    \label{fig:dfa_0}
    \vspace{2mm}
    }
    \end{subfigure}
    \begin{subfigure}{0.49\linewidth}
        \centering
        \scalebox{0.65}{
        \begin{tikzpicture}[->,>=stealth]
            \node[state, initial] (q1) {$q_1$};
            \node[state, accepting, right= 1cm of q1] (q2) {$q_2$};
            \draw (q1) edge[loop above] node{$\neg \predicate_1$} (q1)
            (q1) edge[above] node{$\predicate_1$} (q2)
            (q2) edge[loop above] node{$\top$} (q2);
        \end{tikzpicture}
        }
    \caption{\small DFA for $\psi^{6,18}$
    \label{fig:dfa_6}
    }
    \vspace{2mm}
    \end{subfigure}
    \newline
    \begin{subfigure}{0.9\linewidth}
            \centering
            \scalebox{0.65}{
            \begin{tikzpicture}[->,>=stealth]
            \node[state, initial] (q1) {\shortstack{$q_1$\\ $t \in [0, 6]$}};
            \node[state, right = 2.5cm of q1] (q2) {\shortstack{$q_2$\\ $t \in (6, 18]$}};
            \node[state, accepting, below = of q2] (q3) {\shortstack{$q_3$\\ $t \in (6, 18]$}};
            \draw (q1) edge[loop below] node{$\neg\predicate_2$} (q1)
            (q1) edge[above] node{$\neg \predicate_1$} (q2)
            (q1) edge[bend right, below] node{$\predicate_1$} (q3)
            (q2) edge[loop right] node{$\neg \predicate_1$} (q2)
            (q2) edge[right] node{$\predicate_1$} (q3)
            (q3) edge[loop right] node{$\top$} (q3);
            \end{tikzpicture}
            }
            \caption{\small Final timed NFA. State $q_1$ corresponds to the DFA in (a) and $q_2$ and $q_3$ correspond to the DFA in (b).}
            \label{fig:final_TA}
        \end{subfigure}
    \caption{\small Timed NFA construction of $\Phi_3$ for Example~\ref{ex:hybridautomaton}.}
    \label{fig:dfa}
  \end{center}
  \vspace{-4mm}
\end{figure}
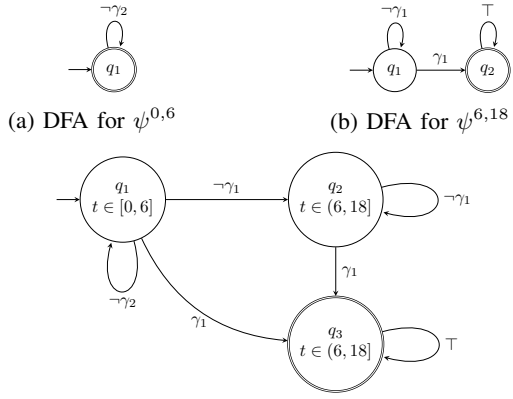

The following theorem
states that the constructed timed automaton accepts precisely the timed words of the trajectory that satisfy $\varphi$.
\vspace{-1mm}
\begin{theorem} 
\label{theorem:timedsatisfiesstl}
A trajectory (timed path) is accepted by the timed NFA $\TA_\varphi$ if and only if it satisfies \stlnn formula $\varphi$.
\end{theorem}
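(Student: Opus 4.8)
The plan is to establish a chain of logical equivalences connecting ``$\px \models \varphi$'' to ``the timed word $w(\px)$ is accepted by $\TA_\varphi$'', maintaining an equivalence (not merely an implication) at every stage so that both directions of the iff fall out together. The backbone follows exactly the construction pipeline of Alg.~\ref{alg:automatonconstruction}: (i) syntactic separation into a time-partitioned DNF, (ii) reduction of each time window to an LTLf formula over the piecewise-constant label sequence, (iii) LTLf-to-DFA correctness, (iv) enforcement of timing via invariants, and (v) concatenation of the per-window timed DFAs along the parse tree. First I would invoke the soundness of STL separation (Proposition~\ref{prop:stlseparation}) to argue $\varphi \equiv \Phi_1 \vee \cdots \vee \Phi_n$, so that $\px \models \varphi$ iff $\px \models \Phi_i$ for some branch $i$. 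Since each branch has the conjunctive form $\Phi_i = \psi_{\la 0,t_1 \ra} \wedge \cdots \wedge \psi_{\la t_{m-1}, t_m \ra}$ over disjoint consecutive windows, $\px \models \Phi_i$ iff $\px \models \psi_{\la t_{j-1}, t_j \ra}$ for every $j$. The union over branches will correspond to the union of parse-tree branches in $\TA_\varphi$, and the conjunction within a branch to the sequential concatenation of timed DFAs; establishing these two correspondences is the heart of the argument.

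The central lemma I would prove is a per-window equivalence: the trajectory satisfies the time-partitioned subformula $\psi_{\la t_{j-1}, t_j \ra}$ iff the untimed word obtained by projecting the segment of the timed path lying in $\la t_{j-1}, t_j \ra$ satisfies the LTLf formula $\psi^j$. The excerpt states only the forward implication, so I would supply the converse. Both directions rest on two facts: predicate-truth invariance of the abstraction regions, which guarantees that the trajectory's predicate behavior on each sub-interval $(t_i, t_{i+1})$ equals the constant label $L(x_i)$; and the fact that, after separation, all temporal operators inside $\psi_{\la t_{j-1}, t_j \ra}$ share the single interval $\la t_{j-1}, t_j \ra$, so the real-time constraints collapse to a uniform restriction of the word to that window. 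Under these facts the continuous STL semantics of $\psi$ over the window and the finite-word LTLf semantics of $\psi^j$ coincide position by position ($\eventually$ becomes ``some position'', $\globally$ becomes ``all positions'', with $\wedge$ and $\neg$ passing through unchanged). Composing with the cited LTLf-to-DFA correctness gives $\px \models \psi_{\la t_{j-1}, t_j \ra}$ iff the label projection lies in $\mathcal{L}(\mathcal{A}_j)$; adding $\la t_{j-1}, t_j \ra$ as an invariant to every state of $\mathcal{A}_j$ then makes $\TA_j$ accept precisely those timed segments whose label projection is accepted by $\mathcal{A}_j$ and whose time stamps lie in $\la t_{j-1}, t_j \ra$.

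Finally I would show that the ConnectDFA step realizes the conjunction as concatenation: gluing each accepting state $q^F_j$ of $\TA_j$ to the successors of the initial state $q^0_{j+1}$ of $\TA_{j+1}$ forces any run of $\TA_\varphi$ to decompose uniquely into consecutive runs on $\TA_1,\ldots,\TA_m$, one per window, so the glued automaton accepts $w(\px)$ iff each windowed segment is accepted by its $\TA_j$, i.e., iff $\px \models \Phi_i$. The accepting-state construction (retaining an absorbing accepting state, or otherwise adding $q^F_\varphi$ with invariant $\la t_f, \infty)$) must be checked to accept exactly when the last window's requirement is met and no spurious extension is forced. Taking the union over branches closes the chain: $w(\px)$ is accepted by $\TA_\varphi$ iff it is accepted along some branch iff $\px \models \Phi_i$ for some $i$ iff $\px \models \varphi$.

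The main obstacle I anticipate is the faithful translation between continuous-time STL satisfaction and the discrete timed-word/LTLf semantics, particularly the converse (accepted $\Rightarrow$ satisfies) direction, where continuous satisfaction must be reconstructed from a finite accepting run. The delicate points are the window seams and the mixed open/closed and point intervals (such as $[6,6]$) produced by separation: I must verify that each boundary time point is evaluated in exactly one window --- neither dropped nor double-counted --- so that the concatenated invariants tile $\timedomain$ consistently, and that the piecewise-constant labeling together with the sampling time stamps $t_i$ captures every predicate switch relevant to the temporal operators. Handling these boundary bookkeeping details rigorously, rather than the high-level equivalence chain, is where the real work lies.
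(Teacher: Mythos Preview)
The paper does not actually prove Theorem~\ref{theorem:timedsatisfiesstl}; immediately after the statement it says ``The proof of Theorem~\ref{theorem:timedsatisfiesstl} is in the extended version of this work~\cite{CDC22extendedversion}.'' So there is no in-paper argument to compare your proposal against. That said, your plan tracks the construction pipeline of Alg.~\ref{alg:automatonconstruction} step for step (separation $\to$ DNF branches $\to$ per-window LTLf reduction $\to$ DFA correctness $\to$ invariants $\to$ concatenation), and this is the only natural route given how $\TA_\varphi$ is built; it is almost certainly what the extended version does.

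One concrete point your per-window lemma glosses over: you spell out the correspondence only for $\eventually$ and $\globally$, but the separation rule of Proposition~\ref{prop:stlseparation} can leave a residual $\phi\,\until_{(\tau,b]}\,\phi'$ in a window. Evaluated at $t=0$, its STL semantics still quantify over the prefix $[0,t']$, which reaches \emph{outside} the window $(\tau,b]$, whereas the LTLf $\phi\,\until\,\phi'$ on the windowed word is purely local. For the equivalence to hold you must argue that, under the minimal partition, the out-of-window prefix obligation is already discharged by the $\globally$ conjuncts placed in earlier windows by the separation, so that the local LTLf reading of $\until$ is sound in context. This is not just boundary bookkeeping at the seams; it is part of the core per-window equivalence and deserves an explicit case in your lemma. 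The remaining issues you flag (open/closed and singleton intervals, tiling of $\timedomain$ by invariants, the absorbing versus added accepting state) are exactly the right things to check.
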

\noindent 
The proof of Theorem \ref{theorem:timedsatisfiesstl} is in the extended version of this work \cite{CDC22extendedversion}.

\subsection{Product Automaton}

The final structure used for high-level discrete search is a product automaton $\mathcal{P}$, which captures all possible ways that paths can satisfy $\varphi$. The product automaton is computed as $\mathcal{P} = \TA_\varphi \otimes \mathcal{M}$ (Line 3 of Alg.~\ref{alg:framework}). $\mathcal{P}$ is a tuple $(Z, Z_0, \delta_p, \text{Inv}_p, F_p)$, where
$Z = Q \times D$, $F_p = F \times D$, 
\begin{itemize}
    \item $\delta_p : Q \times D \rightarrow Q \times D$, such that $(q', d') \in \delta_p(q, d)$ if $(d, d') \in E$ and $q' \in \delta(q, L(d))$, and
    \item $\text{Inv}: Q \times D \rightarrow \mathcal{I}$, such that  $\text{Inv}((q,d)) = \text{Inv}(q)$.
\end{itemize}

We refer to state $z = (q,d) \in Z$ of $\mathcal{P}$ as a high-level state. 
State $z$ inherits the invariant of its automaton state $q$. State $z$ is accepting if $q \in F$ is an accepting state.
Every accepting run of $\mathcal{P}$ is a sequence of high-level states, $(q_0, d_0)(q_1, d_1)\cdots(q_p, d_p)$. A continuous trajectory that follows the sequence of regions $d_0d_1\cdots d_p$ and respects the invariant constraints of $q_0q_1\cdots q_p$ satisfies $\varphi$. 



\subsection{Time Partitioned Automaton-guided Search}
\label{sec:planning}
To generate a satisfying trajectory, we extend the co-safe LTL planning framework in \cite{Bhatia2010, Maly2013} to include time window constraints.
The framework consists of two main layers: an automaton-based high-level search layer and a low-level sampling-based planner. 

\subsubsection{High-level layer}
\label{sec:highlevel}
In each iteration of high-level planning, the high-level layer searches the product automaton $\mathcal{P}$ constructed above. By searching $\mathcal{P}$, an accepting run (called lead) $\mathbf{L}$ is computed at each iteration. Each lead is a
candidate sequence of contiguous high-level states through
which the algorithm attempts to guide a sampling-based motion tree of continuous trajectories in the low-level layer. A graph search algorithm, such as Dijkstra's Algorithm, is used to find the shortest path to an accepting state from a start state according to the following edge weight computation. For each high-level state $(q,d)$, we propose to assign a weight defined by
\begin{align}
    w(d,q) = \frac{(cov(q,d) +1)\cdot vol(d) \cdot duration(q)}{DistFromAcc(q) \cdot (numsel(q,d) +1)^2},
\end{align}
where $cov(q,d)$ is the number of motion tree vertices in the state $(q,d)$, $vol(d)$ is the estimated volume of the state space covered by region $d$, $duration(q)$ is size of the time interval associated with automaton state $q$, $DistFromAcc(q)$ is the shortest unweighted path from the automaton state $q$ to an accepting state in $\TA_\varphi$, and $numsel(q,d)$ is the number of times ($q,d$) has been selected for exploration. The edge weight between two states is then defined as:
\begin{align}
    w((q_1,d_1),(q_2,d_2)) = \frac{1}{w(q_1,d_1)\cdot w(q_2,d_2)}.
\end{align}
These weights are updated in every iteration. Such a weighting scheme has been shown to promote expansion in unexplored areas (i.e., where $cov$ and $numsel$ are both small) and to suppress expansion in areas where attempts at exploration have repeatedly failed (i.e., where $numsel \gg cov$) \cite{Maly2013}.
\subsubsection{Low-level layer - Incorporating Time Constraints}

\begin{algorithm}
    \caption{Explore}
    \label{alg:explore}
    \SetKwInOut{Input}{Input}\SetKwInOut{Output}{Output}
    \Input{$\mathbf{L}$, $C$, $\textsc{tr}$, $\mathcal{P}$,$X$, $t_{e}$,}
    \Output{Trajectory of tree nodes $(\mathbf{v},\mathbf{t})$ and controller $\pu$, if successful; return NULL otherwise.}
    \While{Time Elapsed $< t_e$}
    {
        $(d,q) \leftarrow$ C.sample()\\
        $v_s \leftarrow $ Sample(d,q)\\
        $v \leftarrow $ SelectAndExtend($\mathcal{P}, \textsc{tr}, v_s, X$)\\
        \If{isValid($v$)}{
        Add vertex $v$ to \textsc{tr}\\ 
        \If{isAccepting($v.q$)}{
            $(\mathbf{v},\mathbf{u}, \mathbf{t})  \leftarrow$ Extract trajectory ending with $v$ and control inputs from \textsc{tr}\\
            \Return $(\mathbf{v}, \pu, \mathbf{t})$
        }
        \If{$(v.d, v.q) \notin C \wedge (v.d, v.q) \in L$}
        {
            $C \leftarrow C \cup \{(v.d, v.q)\}$
        }
        }
    }
    \Return $\emptyset$
\end{algorithm}
Given a lead $\mathbf{L}$, we compute the set of high-level states $C$ that exist in $\mathbf{L}$ and contain at least one vertex of a motion tree (line 7 of Alg.~\ref{alg:framework}). The low-level search promotes tree expansion in the high-level states of $C$ (Alg.~\ref{alg:explore}). In each iteration of low-level planning, a high-level state $z$ is sampled from $C$ with probability proportional to the weights of each high-level state in $C$. A tree vertex $v_s$ is then sampled within $z$, and one iteration of tree-search motion planning is performed to obtain a new tree vertex $v_n$. A tree vertex is a tuple $v = (x, q, d, t)$. An iteration of tree-search motion planning involves selecting a tree vertex $v_{sel}$, sampling a valid control $u \in U$ and time duration, and propagating the dynamics \eqref{eq:system} of the continuous state component $x_{sel}$ of $v_{sel}$ using numerical propagation techniques (e.g., Runge-Kutta methods) to obtain a new continuous state component $x_n$ and time $t_n$. Any tree-based motion planner that supports kinodynamic constraints can be used at this step, such as RRT \cite{kinorrt}. The new discrete state components $q_n, d_n$ of $v_n$ are obtained by propagating the product automaton $\mathcal{P}$.

Vertex $v$ is valid if both its continuous and discrete state components are valid. The continuous component $v.x$ is valid if it obeys the state constraints of System~\eqref{eq:system}. Our main addition to the existing multi-layered planner is the handling of time constraints in this low-level layer. The discrete components $v.q$ and $v.d$ are valid if $(v.q,v.d)$ exists in the product automaton and the vertex time $v.t$ is in the invariant of $v.q$, i.e., $v.t \in \text{Inv}(v.q)$. Without considering the time component, a newly extended vertex of the motion tree may have multiple possible discrete components (high-level state) due to the non-determinism of $\TA$. The time component determines the valid high-level state.

Suppose that a new vertex $v_n$ corresponds to a newly reached high-level state that is in the current discrete lead $\mathbf{L}$, this high-level state is added to the set of available states for exploration $C$. Finally, if the automaton state $v_n.q$ is an accepting state, the trajectory ending with $v_n$ and its corresponding controller $\pu$ is returned as a solution.




\section{Correctness and Completeness}

In this section, we show that our framework is sound and probabilistically complete.

\begin{lemma}[Correctness]
\label{lemma:correctness}
Given a lead $\mathbf{L}$ computed by $\mathcal{P}$, $\exists \mathbf{t} = t_0t_1\cdots t_l$ associated with $\mathbf{L}$ that produces a timed word $(\mathbf{w}, \mathbf{t})$ such that any trajectory that realizes this timed word satisfies STL$_{nn}$ formula $\varphi$.
\end{lemma}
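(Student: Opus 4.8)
The plan is to reduce the claim to Theorem~\ref{theorem:timedsatisfiesstl} by producing an explicit non-decreasing sequence of time stamps that is consistent with the invariants encountered along the lead. First I would unpack $\mathbf{L}$: by definition it is an accepting run $(q_0,d_0)(q_1,d_1)\cdots(q_p,d_p)$ of $\mathcal{P}=\TA_\varphi\otimes\mathcal{M}$ with $q_p\in F$. The definition of $\delta_p$ lets me project $\mathbf{L}$ onto its two components, giving (i) a run $q_0q_1\cdots q_p$ of $\TA_\varphi$ ending in an accepting state and (ii) a sequence of regions $d_0d_1\cdots d_p$ with $(d_i,d_{i+1})\in E$ and $q_{i+1}\in\delta(q_i,L(d_i))$. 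Reading off the region labels yields the untimed word $\mathbf{w}=L(d_0)L(d_1)\cdots L(d_p)$, so it remains only to attach time stamps $\mathbf{t}$ making $(\mathbf{w},\mathbf{t})$ a timed word accepted by $\TA_\varphi$.

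Next I would build $\mathbf{t}$ from the structure imposed by Alg.~\ref{alg:automatonconstruction}. The key observation is that the invariants encountered along any run of $\TA_\varphi$ form an ordered, contiguous family of intervals: every state produced from the DFA of a given partition segment carries the same invariant $\la t_{j-1},t_j\ra$, and ConnectDFA links these DFAs precisely in the order of their time segments (with the terminal absorbing state carrying $\la t_f,\infty)$). Hence $\text{Inv}(q_0),\text{Inv}(q_1),\ldots,\text{Inv}(q_p)$ is non-decreasing, in the sense that the left endpoint of each interval is at least that of its predecessor and the intervals abut at shared endpoints whenever the run crosses from one segment to the next. I would then assign $t_i$ greedily by setting $t_0$ to the left endpoint of $\text{Inv}(q_0)$ and $t_i=\max\{t_{i-1},\ \ell_i\}$, where $\ell_i$ is the left endpoint of $\text{Inv}(q_i)$; contiguity guarantees this value still lies in $\text{Inv}(q_i)$, and by construction $t_i\ge t_{i-1}$. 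Degenerate point intervals $[t,t]$ and repeated invariants cause no trouble, since time stamps need only be non-decreasing, so consecutive equal values are permitted.

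With $\mathbf{t}$ fixed, I would verify directly against the timed-NFA transition rule that each step $q_i\xrightarrow{(L(d_i),t_i)}q_{i+1}$ is legal, i.e.\ $q_{i+1}\in\delta(q_i,L(d_i))$ holds from the projection and the time stamp lies in the required invariant; since $q_p\in F$, the timed word $(\mathbf{w},\mathbf{t})$ is accepted by $\TA_\varphi$. Applying Theorem~\ref{theorem:timedsatisfiesstl} then shows that every trajectory realizing $(\mathbf{w},\mathbf{t})$ satisfies $\varphi$, which is exactly the statement of the lemma.

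I expect the main obstacle to be the careful handling of interval boundaries: the separation of Proposition~\ref{prop:stlseparation} introduces partition points $\tau$ at which adjacent invariants meet with complementary open/closed sides (e.g.\ $[0,6)$ abutting $[6,6]$ abutting $(6,18]$), so the greedy endpoint assignment must be checked to respect these half-open boundaries, and the exact index at which a time stamp must fall relative to the transition rule's requirement $t\in\text{Inv}(q')$ must be pinned down. Establishing the contiguity-and-ordering property of the invariants rigorously --- tracing it through ConnectDFA and the parse-tree ordering --- is the delicate part; once that is in place, the monotone selection and the appeal to Theorem~\ref{theorem:timedsatisfiesstl} are routine.
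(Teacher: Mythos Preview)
Your proposal is correct and follows essentially the same route as the paper: project the lead from $\mathcal{P}$ onto $\TA_\varphi$ and $\mathcal{M}$, argue that the resulting untimed word can be time-stamped into a timed word accepted by $\TA_\varphi$, and then invoke Theorem~\ref{theorem:timedsatisfiesstl}. The paper gives only a proof sketch and simply asserts that the word ``can be converted to a timed word'' accepted by $\TA_\varphi$, whereas you go further by proposing an explicit greedy construction of $\mathbf{t}$ from the ordered invariants and flagging the open/closed boundary and indexing subtleties that such a construction must respect; this extra concreteness is a strict improvement over the paper's argument rather than a different approach.
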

\begin{proof}[Proof Sketch]
    Observe that $\TA_\varphi$ is an NFA with time constraints.
    The proof is straightforward from the definition of the language of the NFA corresponding to $\TA_\varphi$ and Theorem~\ref{theorem:timedsatisfiesstl}. Consider a lead $\mathbf{L}$ that produces a word $\mathbf{w}$, through the sequence of regions $\mathbf{d}$, accepted by $\mathcal{P}$. Since $\mathcal{P}$ is a Cartesian product of $\TA_\varphi$ and $\mathcal{M}$, the set of words that are accepted by $\mathcal{P}$ is a subset of the language of the NFA. Therefore, this word can be converted to a timed word $(\mathbf{w}, \mathbf{t})$ that is accepted by $\TA_\varphi$. By Theorem~\ref{theorem:timedsatisfiesstl}, all trajectories that realizes this timed word necessarily satisfies $\varphi$.
    \vspace{-1mm}
\end{proof}
Lemma~\ref{lemma:correctness} provides the correctness guarantee of the proposed framework.  That is, the returned trajectory by the algorithm is guaranteed to satisfy $\varphi$. The algorithm is also probabilistically complete, as given by the theorem below.
\begin{theorem}[Probabilistic Completeness]
    Alg.~\ref{alg:framework} is probabilistically complete, i.e., Alg.~\ref{alg:framework} finds a satisfying solution if one exists almost surely (with probability $1$) as time approaches infinity.
\end{theorem}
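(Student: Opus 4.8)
The plan is to reduce the statement to two facts: that the high-level search offers the segments of a satisfying lead for low-level exploration infinitely often, and that, conditioned on such an offer, the low-level sampling-based tree search reproduces a satisfying trajectory with probability approaching one. First I would fix a satisfying solution, which exists by hypothesis; call its trajectory $\px^\star$ and controller $\pu^\star$. By Theorem~\ref{theorem:timedsatisfiesstl} the timed word of $\px^\star$ is accepted by $\TA_\varphi$, and by Lemma~\ref{lemma:correctness} it corresponds to an accepting run of $\mathcal{P}$, i.e., a sequence of high-level states $z_0^\star z_1^\star \cdots z_p^\star$ whose time stamps respect the invariants $\text{Inv}$. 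This run is itself a valid lead in the product automaton.

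Second, I would show the high-level layer does not permanently starve the segments of this target lead. The edge weights are built from the inverse product of the per-state weights $w(q,d)$, which increase when a state is under-explored (small $cov$, small $numsel$) and decrease when repeated selections fail to expand the tree (large $numsel$ relative to $cov$). As established in \cite{Maly2013}, this dynamic weighting ensures fair exploration, so no reachable high-level state is ignored indefinitely. Hence, as the outer loop of Alg.~\ref{alg:framework} runs with $t_{max}\to\infty$, for each $i$ a lead containing the segment from $z_i^\star$ to $z_{i+1}^\star$ is returned infinitely often, and each such segment receives infinitely many low-level extension attempts.

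Third, and this is the technical core, I would show by induction along the lead that the tree almost surely reaches each $z_i^\star$. The base case holds since the root lies in $z_0^\star$. For the inductive step, suppose the tree contains a vertex in a neighborhood of $z_i^\star$ with a valid clock value. Adapting the standard kinodynamic RRT completeness argument \cite{kinorrt}, I would cover the portion of $\px^\star$ between $z_i^\star$ and $z_{i+1}^\star$ by overlapping balls and show that a single extension---sampling a tree vertex, a control $u\in U$, and a duration, then propagating System~\eqref{eq:system}---lands inside the next ball while keeping the accumulated clock inside the relevant invariant with probability bounded below by some $p_i>0$. Since this segment is attempted infinitely often and the Sample routines place positive probability on every open set, the Borel--Cantelli lemma gives that the tree reaches a neighborhood of $z_{i+1}^\star$ almost surely. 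Iterating to $i=p$ shows the tree almost surely acquires a vertex in an accepting high-level state, which the algorithm then returns.

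The main obstacle is the time bookkeeping inside this inductive step: unlike untimed RRT completeness, reaching the correct region does not suffice---the monotonically accumulating time $v.t$ must fall inside each state's invariant $\text{Inv}(v.q)$. I would handle this by choosing the satisfying timed path to lie in the interior of all interval constraints (strict satisfaction), so that a nonempty open set of schedules around the nominal timing of $\px^\star$ still meets every bound; a continuity argument on the flow of System~\eqref{eq:system} then transfers this slack to nearby propagated trajectories, furnishing the positive per-step probability $p_i$. This step also makes explicit the assumptions inherited from \cite{Maly2013, kinorrt}---Lipschitz continuity of $f$ and positive clearance of $\px^\star$ from both the state constraints and the invariant boundaries---under which the lower bounds $p_i>0$ hold.
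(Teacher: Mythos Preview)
Your proposal follows essentially the same two-layer decomposition as the paper's proof sketch---fairness of the high-level lead selection via the dynamic edge-weight updates of \cite{Maly2013}, combined with probabilistic completeness of the low-level kinodynamic tree search---and is correct. Your treatment is considerably more detailed than the paper's, in particular by making explicit the ball-covering induction along the reference trajectory, the Borel--Cantelli step, and the clearance assumptions needed for the time-invariant bookkeeping, all of which the paper's sketch leaves implicit.
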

\begin{proof}[Proof Sketch]
    The proof of probabilistic completeness follows from the probabilistic completeness of the tree search based algorithm. In each iteration of high-level planning discrete search of the product automaton computes a lead, which is a valid candidate series of high-level regions by Theorem~\ref{lemma:correctness}. For the low-level layer, the probability of finding a continuous trajectory solution that obeys the invariant constraints according to a given lead, if one exists, approaches $1$ as $t_e \rightarrow \infty$. Since the edge weights of the product automaton are updated every iteration, the probability of searching all possible solution paths in $\mathcal{P}$ approaches $1$ as well. By iterating between high-level and low-level planning, the algorithm produces continuous trajectories covering the entire search space with probability $1$ as $t_{\max} \rightarrow \infty$.
\end{proof}
\section{Evaluation}
\label{sec:experiments}
The objective of our evaluation is to analyze the efficacy of our automaton-guided approach for control synthesis under \stlnn specifications. We first illustrate the advantage of our framework in comparison benchmarks against related work \cite{stl-rrtstar} on linear and linearizable systems with linear predicates. Then, we show the impact of time partitioning on algorithm performance. Finally, we demonstrate the generality of our approach on a nonlinear system with polynomial predicates.

\subsection{Comparison Benchmarks}

We first consider the two systems in \cite{stl-rrtstar}, a 2D double integrator $\dot{x} = y$, $\dot{y} = u$
with a specification 
\begin{align*}
    \varphi_1 = &\eventually_{[2,10]}(3.5 < x \leq 4 \wedge -0.2 < y \leq 0.2) \;\; \wedge \\& \globally_{[0,2]}(-0.5 < y \leq 0.5) \;\; \wedge \\& \globally_{[0,10]}((2 < x \leq 3) \rightarrow (y > 0.5 \vee x \leq -0.5)),
\end{align*}
and a second order unicycle system with dynamics
\begin{align*}
    \dot{x} = v \cos{(\phi)}, \quad
    \dot{y} = v \sin{(\phi)}, \quad \dot{\phi} = \omega,  \quad \dot{v} = a\text{,}
\end{align*}
and specification 
    $\varphi_2 = \eventually_{[0,18]}(3 < x \leq 4 \wedge 2 < y \leq 3) \; \wedge \;
               \globally_{[0,6]}\neg(1 < x \leq 2 \wedge 2 < y \leq 3).$

We compared our approach to \cite{stl-rrtstar}, which we call STL-RRT*, another sampling based algorithm for STL that does not utilize automaton construction. We implemented both algorithms in the Open Motion Planning Library (OMPL) \cite{sucan2012the-open-motion-planning-library}. Table~\ref{tab:benchmarking} shows the time taken to compute a satisfying solution for both systems and specifications over $100$ trials, and the number of solutions found within the time limit of 30 seconds. It can be seen that our approach computes solutions considerably faster than STL-RRT*, with a success rate of $100\%$, whereas STL-RRT* can solve only $13\%$ of the instances for the unicycle model.  Additionally, we note that we use the full nonlinear dynamics for the unicycle model in our algorithm, while we need to use state-feedback linearization to ensure the availability of a steering function for both state and time (punctuality) for STL-RRT*. 
\begin{table}[h]
    \centering
    \caption{\small Benchmarking results.}
    \label{tab:benchmarking}
    \scalebox{0.96}{
    \begin{tabular}{l|c|c|c|c}
     & \multicolumn{2}{c|}{Double Integrator} & \multicolumn{2}{c}{Unicycle}\\
    Algorithm &  Succ. (\%) & Time (s) & Succ. (\%) & Time (s)\\\hline
    STL-RRT* & $\mathbf{100}$ & $1.83 \pm 0.12$ & $13$ & $7.51 \pm 0.71$ \\
    $\TA$-based & $\mathbf{100}$ & $\mathbf{1.10 \pm 0.02}$ & $\mathbf{100}$ &  $\mathbf{1.57 \pm 0.11}$
    \end{tabular}
    }
\end{table}



\subsection{Impact of Time partitioning}

We study the benefit of varying the number of time partitions on the double integrator with the specification:
\vspace{-1mm}
\begin{align*}
    \globally_{[0,t_1)} (x \leq 3.5 \wedge y \leq 3.5) \wedge \eventually_{[t_1,t_2]}(x > 3.5 \wedge y > 3.5),
\end{align*}

\vspace{-1mm}
\noindent
which forbids $x > 3.5$ and $y > 3.5$ until $t_1$, after which the system must eventually reach $x > 3.5, y > 3.5$ within the interval $[t_1, t_2]$. The minimal partition for this formula is $\{0,t_1\}$. We further partition the formula between $[0,t_1]$ to varying number of partitions $k$. The success rate and mean time to solution over 100 trials is reported in Table~\ref{tab:partitioning}, given a time limit of 30 seconds. 
\begin{table}[t]
    \centering
    \caption{\small Impact of varying the number of time partitions on solution time and success rate.}
    \vspace{-1mm}
    \begin{tabular}{c|c|c}
    $k$ time partitions & Success (\%) & Time (s) \\\hline
    $1$ & $48$ & $3.77 \pm 0.64$\\
    $2$ & $81$ & $4.99 \pm 0.56$\\
    $\mathbf{5}$ & $\mathbf{100}$ & $\mathbf{4.42 \pm 0.51}$\\
    $15$ & $100$ & $8.35 \pm 0.32$\\
    $30$ & $100$ & $10.50 \pm 0.59$\\
    $50$ & $100$ & $8.89 \pm 0.50$\\
    $200$ & $51$ & $22.7 \pm 0.57$
    \end{tabular}
    \label{tab:partitioning}
    \vspace{-3.5mm}
\end{table}

The results show that varying time partitioning affects the decomposition granularity of the constructed automaton in the time dimension, which directly impacts the computational efficiency of the algorithm. Increasing the number of time partitions can improve the efficiency of solutions. This is due to the high-level planning layer being able to provide guides of finer granularity in the time dimension for the low-level layer. However, decomposition granularity that is too fine can lead to increased computation times, since the lead becomes harder for the low-level layer to follow.

\subsection{Nonlinear system and predicates}

To demonstrate the generality of our framework compared to previous approaches, we used our framework to synthesize a controller for the car-like nonlinear system in \cite{Webb2012KinodynamicRO}, with the following specification involving nonlinear predicates:
\vspace{-1mm}
\begin{align*}
    \varphi_3 =& \globally_{[0,20.0]}((x - 5)^2 + (y-5)^2 > 2.0) \;\; \wedge \\
               & \eventually_{[0,20.0]}(x-10)^2 + y^2 < 2.0) \;\; \wedge \\ 
               & \eventually_{[0,10.0]}(x^2 + (y-10)^2 < 2.0).
\end{align*}

\vspace{-1mm}
\noindent
Our algorithm computed a solution in less than $5$ seconds. Fig.~\ref{fig:nonlinearsystem} shows the resulting trajectory, with $((x - 5)^2 + (y-5)^2 > 2.0)$, $((x-10)^2 + y^2 ) < 2.0$ and $(x^2 + (y-10)^2 < 2.0)$ shown as red, blue and green regions respectively. 
\vspace{-1mm}
\begin{figure}[h]
    \centering
    \includegraphics[width=0.37\linewidth]{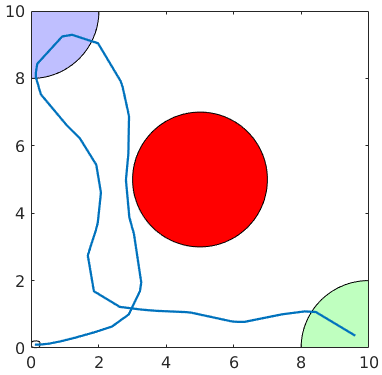}
    \vspace{-1mm}
    \caption{\small Computed trajectory for $\varphi_3$ in blue. The system starts at the bottom left, and the controller drives it to the blue region within 10 seconds, and then to the green region within 20 seconds, while avoiding the red region.}
    \label{fig:nonlinearsystem}
\end{figure}
\section{Conclusion and Future Work}

This paper proposes an algorithmic framework for automaton-guided control synthesis under STL specifications, for both nonlinear dynamics and predicates. The proposed framework is correct by construction and probabilistically complete, and empirical evaluation demonstrates its effectiveness and efficiency. For future work, a natural extension is to generalize our approach to nested STL specifications. Additionally, we plan to consider dynamics and measurement uncertainty by incorporating tree-search approaches such as in \cite{ho2022gbt}, which would allow such control systems to be more readily applied in real world scenarios.





\bibliographystyle{IEEEtran}
\bibliography{bibliography}

\end{document}